\documentclass[letterpaper, 10 pt, conference]{ieeeconf}  

\IEEEoverridecommandlockouts                              
\usepackage{array}
\usepackage{tabularx}
\usepackage{xcolor}
\usepackage{amsmath}
\usepackage{amsfonts}
\usepackage{mathtools}
\usepackage{stmaryrd}
\usepackage[ruled,vlined]{algorithm2e}
\usepackage{tcolorbox}
\usepackage{wrapfig}
\usepackage{tikz}
\usepackage{graphicx}
\usepackage{svg}
\usepackage{amssymb}
\usepackage{hyperref}
\usepackage{moresize}
\usepackage[utf8]{inputenc}
\usepackage{lipsum}
\usepackage{newtxmath}
\let\labelindent\relax
\usepackage{enumitem}
\usepackage{scalerel}

\hypersetup{
    colorlinks=true,
    linkcolor=black,
    urlcolor=blue,
    pdftitle={},
    pdfpagemode=FullScreen,
    }
\usepackage{anyfontsize}

\usetikzlibrary{shapes.geometric, arrows}
\usetikzlibrary{calc,fit,arrows}
\newtheorem{theorem}{Theorem}[section]
\newtheorem{lemma}[theorem]{Lemma}

\newtheorem{problem}[theorem]{Problem}

\newtheorem{remark}[theorem]{Remark}
\newtheorem{assumption}[theorem]{Assumption}
\usepackage{accents}

\usepackage{cite}

\title{\textbf{Towards Certified Sim-to-Real Transfer via Stochastic Simulation-Gap Functions}}
\author{P Sangeerth$^1$, Abolfazl Lavaei$^2$, and Pushpak Jagtap$^{1}$
\thanks{$^*$ This work was supported in part by ANRF ARG Grant, ARTPARK, and Siemens.}
\thanks{$^1$Center for Cyber-Physical Systems, Indian Institute of Science, Bengaluru, India. Emails: {\tt\small \{sangeerthp,pushpak\}@iisc.ac.in}}%
\thanks{$^2$School of Computing, Newcastle University, United Kingdom. Email: {\tt\small abolfazl.lavaei@newcastle.ac.uk}}
}
  
\begin{document}
\maketitle
\pagestyle{empty}


\begin{abstract}
This paper introduces the notion of \emph{stochastic simulation-gap function}, which formally quantifies the gap between an approximate mathematical model and a high-fidelity \emph{stochastic} simulator. Since controllers designed for the mathematical model may fail in practice due to unmodeled gaps, the stochastic simulation-gap function enables the simulator to be interpreted as the nominal model with bounded state- and input-dependent disturbances.  We propose a data-driven approach and establish a formal guarantee on the quantification of this gap. Leveraging the stochastic simulation-gap function, we design a controller for the mathematical model that ensures the desired specification is satisfied in the high-fidelity simulator with high confidence, taking a step toward bridging the sim-to-real gap. We demonstrate the effectiveness of the proposed method using a TurtleBot model and a pendulum system in stochastic simulators.
\end{abstract}

\section{Introduction}
\label{sec:introduction}
Mathematical models of real-world systems often exhibit inaccuracies, rendering traditional model-based controllers unreliable. This has driven the development of data-driven control techniques \cite{ ESMAEILI2024540,10156081}, which design controllers directly from input-output data without requiring an explicit system model. Some approaches \cite{jagtap2020control, awan2023formal, 8368325} utilize Gaussian processes to provide probabilistic guarantees, while others \cite{10015033,10644811,lopez2020robust,9683437} employ data-driven control barrier functions (CBF), as well as robust and adaptive control strategies.

Collecting extensive data for data-driven control is challenging in practice. Advanced simulators address this issue by enabling precise and efficient data collection, thus improving Sim2Real transfer. Simulation platforms such as Gazebo \cite{1389727}, CARLA \cite{dosovitskiy2017carla}, ADAMS-Simulink \cite{brezina2011using}, Unreal Engine\footnotemark[3]{}, NVIDIA Drive Constellation\footnotemark[4]{}, CarMaker\footnotemark[5]{}, and CoppeliaSim\footnotemark[6]{} improve simulation realism through sensor models, physics engines, and uncertainty models, supporting the development of effective data-driven controllers.
\footnotetext[3]{https://www.unrealengine.com}
\footnotetext[4]{https://resources.nvidia.com/en-us-auto-constellation/drive-constellation}
\footnotetext[5]{https://ipg-automotive.com/en/products-solutions/software/carmaker/}
\footnotetext[6]{https://www.coppeliarobotics.com/}

Several studies, such as \cite{10260398,tan2018sim,10611401,akella2023safety}, focus on bridging the gap between simulation environments and reality within the Sim2Real context. In \cite{10611401}, a technique is developed to minimize the Sim2Real gap in object detection.
In reinforcement learning, the Sim2Real gap has been measured as the difference in expected rewards between simulated and real systems \cite{tan2018sim}, though without formal guarantees. In \cite{akella2023safety}, the Sim2Real gap is quantified by modeling the real world as a deterministic system and representing the gap as a numerical value with probabilistic guarantees.

\begin{figure}[t]
	\centering
	\includegraphics[width=0.75\linewidth]{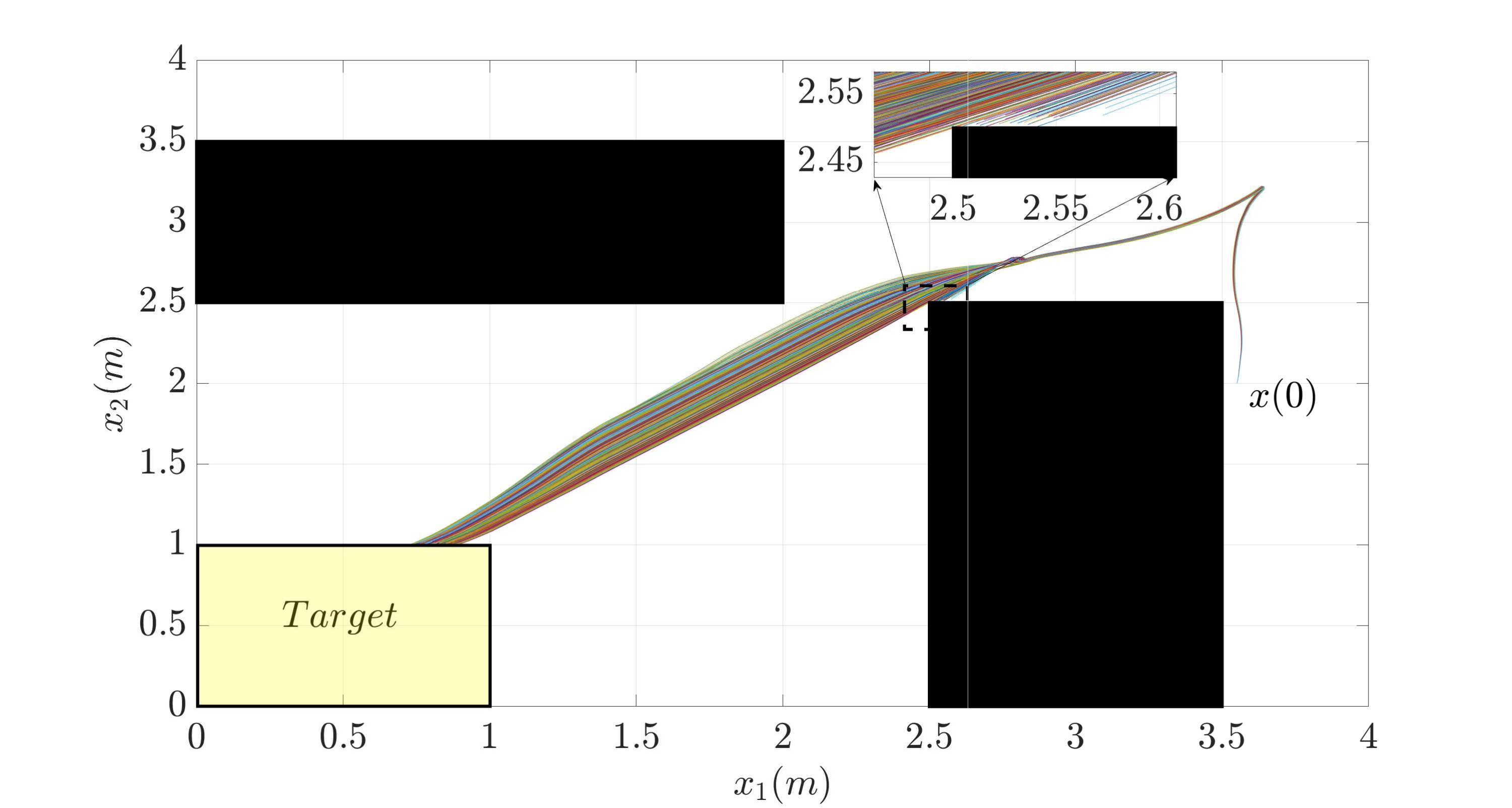}
	\caption{Simulation showing that out of 1000 realizations of TurtleBot from $(x,y,\theta) = (3.5,2,1.5)$ using the controller and deterministic simulation gap $\gamma(x,u)$ from \cite{sangeerth2024towards}, 44 resulted in collisions in the Gazebo simulator due to unmodeled simulator stochasticity. Obstacles are shown in black.}
	\label{fig:  with det gamma failure}
\end{figure}

{In \cite{sangeerth2024towards}, the Sim2Real gap is represented as a state- and input-dependent function under the assumption of a deterministic simulator. In practice, however, both simulators and real-world systems typically exhibit stochastic behavior due to factors such as sensor noise, actuator imperfections (\emph{e.g.,} delays and dead-band), and parametric uncertainties \cite{10196019}. To better capture real-world variability, high-fidelity simulators frequently incorporate stochastic models for sensing and actuation. To demonstrate this effect, we revisit Case Study II from \cite{sangeerth2024towards} and generate multiple trajectories in Gazebo starting from the same initial condition using a controller designed under deterministic assumptions, following \cite{sangeerth2024towards}. The resulting trajectories reveal noticeable variability caused by simulator stochasticity. As shown in Fig. \ref{fig: with det gamma failure}, this variability can occasionally lead to violations of the task specification (\emph{i.e.,} failure to reach the target while avoiding obstacles). These observations underscore the importance of explicitly accounting for stochasticity in high-fidelity simulators when addressing the sim-to-real gap, which forms the primary motivation for this work.}

{\bf Key Contribution.} Inspired by this critical challenge, this work adopts a data-driven framework to quantify the \emph{stochastic} simulation gap between the nominal model and the high-fidelity stochastic simulator, assuming the latter closely approximates reality. To the best of our knowledge, this is the first approach to model the sim-to-real gap as a state- and input-dependent function while accounting for simulator stochasticity. Using this gap, we employ model-based control to design a controller that ensures the desired specifications are met in the stochastic simulator with probabilistic guarantees. We validate our proposed approach on two nonlinear systems: a Turtlebot (Gazebo) and a pendulum (PyBullet).

\section{System Description}

\label{preliminaries section}
\textbf{Notations.} We denote sets of non-negative integers and non-negative real numbers, respectively, by $\mathbb{N}$ and $\mathbb{R}^+_0$. The $n$-dimensional Euclidean space is $\mathbb{R}^n$, and a column vector $x \in \mathbb{R}^n$ is written as $x = [x_1; x_2; \cdots; x_n]$ with Euclidean norm represented by $\lVert x\rVert$.  For a matrix $A \in \mathbb{R}^{m \times n}$, $A^\top$ denotes its transpose. The Borel $\sigma-$algebras on a set $X$ are denoted by $\mathcal{B}(X)$. The measurable space on $X$ is denoted by $(X,\mathcal{B}(X))$. The probability space is denoted by $(\Omega, \mathcal{B}(\Omega), \mathbb{P})$, where $\mathcal{B}(\Omega)$ is the $\sigma$-algebra, and $\mathbb{P}$ is the probability measure. Considering a random variable $z$, $Var(z):=\mathbb{E}(z^2)-(\mathbb{E}(z))^2$, where $\mathbb{E}$ denotes the expectation operator. For a set $\mathcal{C}$, its complement is represented as $\mathcal{C}^c$.\\

In this paper, we consider a nominal mathematical model represented by a discrete-time system, obtained by discretizing conventional continuous-time models \cite{rabbath2013discrete} with a sampling time $\tau > 0$, given by
\begin{equation}
\label{disc_mathematical_model}
    {\Sigma}\!: x^+ = {f}_{\tau}(x,u),
\end{equation}
where $x^+$ represents the state variables at the next time step, \emph{i.e.,} $x^+ \coloneq x(k + 1), \; k \in \mathbb{N}$.  Moreover, $x =[x_1; x_2; \cdots ; x_n] \in X \subset \mathbb{R}^n$ is the state vector in the compact state set $X$, $u=[u_1;u_2;\cdots; u_m] \in U\subset \mathbb{R}^m$ is the input vector in the \emph{finite} input set $U$ with cardinality $M$, and ${f}_{\tau}:\textit{X} \times \textit{U} \rightarrow \textit{X}$ is the known transition map, which is assumed to be locally Lipschitz continuous to guarantee the uniqueness and existence of the solution \cite{Khalil:1173048}.

Due to modeling inaccuracies, traditional mathematical model-based controllers often fail to perform effectively in the real world. However, recent advances in sensor technology allow precise state measurements, improving the system's understanding. This has facilitated the development of high-fidelity simulators such as Webots\footnotemark[7]{}, Pybullet \cite{coumans2021}, and MuJoCo \cite{6386109}, which further enhanced the ability to replicate real-world dynamics by incorporating sensor models, physics engines, and uncertainty models. 

In this work, we model the evolution of the real-world control system in stochastic simulators as an unknown stochastic discretized map with the same sampling parameter $\tau$:
\footnotetext[7]{{http://www.cyberbotics.com/}}
\begin{equation}\label{simulator_model}
    \hat{\Sigma}\!: x^+=\hat{f}_{\tau}(x,u,w),
\end{equation}
where the definitions of $x \in X$ and $u \in U$ remain the same as that of the mathematical model, while $w=[w_1; w_2; \cdots ; w_p] \in \mathbb{R}^p$ is a random variable on the Borel space ${V}_w$ with an \emph{unknown} distribution $\mathbb{P}$. Moreover, $\hat{f}_{\tau}:\textit{X} \times \textit{U} \times V_w \rightarrow \textit{X}$ is an \emph{unknown} transition map. For brevity, $f_{\tau}$ and $\hat{f}_{\tau}$ will be represented as $f$ and $\hat{f}$, respectively, throughout the paper. {The notation $f_i$ and $\hat{f}_i$ denotes the $i$-th component of the vector $f$ and $\hat{f}$.}

We now formally outline the problem under consideration in this work. 
\label{problem formulation section} 
\begin{tcolorbox}[width=\linewidth, colback=white,colframe=black]
\begin{problem}
    \label{Problem-1}
    Given the nominal mathematical model ${\Sigma}$ in \eqref{disc_mathematical_model} and the stochastic high-fidelity simulator model $\hat{\Sigma}$ in \eqref{simulator_model}, we aim to
    \begin{enumerate}
        \item[(i)] formally quantify a stochastic simulation gap between $\Sigma$ and $\hat\Sigma$, represented by a function $\gamma_i:X \times U \rightarrow {\mathbb{R}^+_0}$ such that $\forall i \in \{1,2,\dots,n\}$, $\forall x \in X$, $\forall u \in U $, and $\forall w \in V_w$,
        \begin{equation}
        \label{problem_statement_equation}
            |\mathbb{E}_w\hat{f}_i(x,u,w)-f_i(x,u)| \leq \gamma_i(x,u)
        \end{equation}
        with a probabilistic guarantee and
        \item[(ii)] design a controller using the nominal mathematical model and the stochastic simulation-gap function $\gamma(x,u):=[\gamma_1(x,u);\gamma_2(x,u);\ldots;\gamma_n(x,u)]$ that enforce the desired specifications in the stochastic high-fidelity simulator ${\hat\Sigma}$ while offering the probabilistic guarantee.
    \end{enumerate}
\end{problem}
\end{tcolorbox}
\begin{remark}
\label{gamma_triviality_remark}
    Note that an arbitrarily large $\gamma_i(x,u)$ (\emph{e.g.} trivially $+\infty$) upper bounds $|\mathbb{E}_w\hat{f}_i(x,u,w)-{{f}_i}(x,u)|$. This work aims to obtain the \emph{tighter} upper bound by minimizing $\gamma_i(x,u)$ using the proposed optimization framework. The subscript $i$ belongs to the entire set $I$ unless otherwise stated.
\end{remark}

\section{Data-driven Framework} \label{Sec:Data-driver Framework}
To solve Problem \ref{Problem-1}, we develop a data-driven approach using data collected from the mathematical model and the high-fidelity simulator to derive the stochastic simulation-gap function through an optimization approach. 

The condition \eqref{problem_statement_equation} can be reformulated as the following robust optimization program (ROP) for all $i \in I$:
\begin{align}\notag
\min_{\gamma_i \in \mathcal{H},\eta_i \in \mathbb{R}}
&\eta_i  \\\notag
\text{s.t.}\quad &\gamma_i(x,u) \leq \eta_i, \quad \forall x\!\in\! X, \forall u \!\in\! U,\\\notag
&|\mathbb{E}_w\hat{f}_i(x,u,w)\!-\!{{f}_i}(x,u)|\!-\! \gamma_i(x,u) \!\leq\! 0, \nonumber\\ \label{eq: ROP_for_one_state}
&\forall x\!\in\! X, \forall u\! \in \! U,
\end{align}
where $\mathcal{H}:=\{g \,\big|\, g\!:X \times U \rightarrow \mathbb{R}\}$ is a functional space.

Solving the ROP in \eqref{eq: ROP_for_one_state} presents significant challenges. First, it has infinite constraints due to the continuous state space. Second, $\hat{f}_i(x,u,w)$ is unknown, making it impossible to take its expectation directly. Third, the structure of $\gamma_i(x,u)$ is also unknown. To overcome these challenges, we start by fixing the structure of $\gamma_i(x,u)$ as $\sum_{l=1}^{z_i} q^{(l)}_ip^{(l)}_i(x,u)$, a parametric form linear in the decision variable $q_i= [q^{(1)}_i;\cdots;q^{(z_i)}_i] \in \mathbb{R}^{z_i}$, with user-defined (potentially nonlinear) basis functions $p^{(l)}_i(x,u)$. These basis functions can be chosen arbitrarily as any smooth functions. While prior knowledge of the system can inform this choice, the Weierstrass Approximation Theorem \cite{stone1948generalized} can also provide justification for using \emph{polynomials}, as they can uniformly approximate any continuous function over compact sets using polynomial functions.

We aim to develop a data-driven scheme for computing the simulation-gap function without directly solving the ROP in \eqref{eq: ROP_for_one_state}. To achieve this, we collect $N$ samples of the state $x_r$ from $X$, where $r=\{1,2,\ldots,N\}$. Consider a ball $X_r$ around each sample $x_r$ with radius $\epsilon$, such that for all points in $X$, there exists $x_r\in X$ that satisfies
\begin{equation}
\label{max_epsilon}
    \Vert x-x_r \Vert \leq \epsilon, \quad \forall x \in X_r.
\end{equation}
{This ensures that the collection of these balls forms a finite cover of the compact state set $X$, satisfying  $X \subseteq \bigcup_{r=1}^N X_r$.}

Recall that the cardinality of the finite input set is $M$. For all states, for each $ r\in\{1,\ldots,N\}$, with $x=x_r$, and for all $u\in U$, if one can collect $N\times M$ data of $f_i(x,u)$ and $\mathbb{E}_w\hat{f}_i(x,u,w)$ from the systems $\Sigma$ and $\hat{\Sigma}$ for the sampling time $\tau$, the ROP in \eqref{eq: ROP_for_one_state} can be reformulated as a scenario convex program (SCP) as follows:
\begin{align}
\min_{q_i \in \mathbb{R}^{z_i}, \eta_i \in \mathbb{R}} &\eta_i  \nonumber\\
\text{s.t.}\quad  & q_i^\top p_i(x_r,u) \hspace{-0.2em}\leq\hspace{-0.2em} \eta_i,~ \forall r\in\{1,\ldots,N\},u \in U, \nonumber\\
&|\mathbb{E}_w\hat{f}_i(x_r,u,w)\hspace{-0.2em}-\hspace{-0.2em}{{f}_i}(x_r,u)|-q_i^\top p_i(x_r,u)\hspace{-0.2em}\leq \hspace{-0.2em}0, \nonumber\\&\forall r\in\{1,\ldots,N\}, u \in U.\label{eq: SCP_for_one_state}
\end{align}
 The term $\mathbb{E}_w\hat{f}_i(x_r,u,w)$ in the SCP is generally unknown since the distribution $\mathbb{P}$ is not known to us. To address this, we approximate the expectation using its empirical mean by drawing $\hat{N}_1$ independent
and identically distributed (i.i.d.) samples $w_k, k\in\{1,2,\ldots,\hat{N}_1\}$, from $\mathbb{P}$ for each $(x_r,u)$ pair. In total, $N \times M \times \hat{N}_1$ data of $x^+$ is collected from $\hat{\Sigma}$ to shape the scenario convex program as follows, denoted by SCP-$\hat{N}_1$:
\begin{align}
\min_{q_i \in \mathbb{R}^{z_i}, \eta_i \in \mathbb{R}} &\eta_i  \nonumber\\
\text{s.t.} \quad  &q_i^\top p_i(x_r,u) \leq \eta_i, \forall r\in\{1,\ldots,N\}, \forall u \!\in\!U, \nonumber\\
&\bigl|\frac{1}{\hat{N}_1}\sum_{k=1}^{\hat{N}_1}\hat{f}_i(x_r,u,w_k)-{{f}_i}(x_r,u)\bigr|-q_i^\top p_i(x_r,u)\nonumber\\&+\delta_{1,i} \leq 0, \forall r\in\{1,\ldots,N\}, \forall u \in U.
\label{eq: SCP_for_one_state_n}
\end{align}
In SCP-$\hat{N}_1$, the term $|\frac{1}{\hat{N}_1}\sum_{k=1}^{\hat{N}_1}\hat{f}_i(x_r,u,w_k)-{{f}_i}(x_r,u)|$ represents the absolute difference between the empirical mean of the stochastic evolutions of simulator and the deterministic evolution of the mathematical model. This difference can be computed directly using the dataset without requiring explicit knowledge of $\hat{f}$ or the probability distribution $\mathbb{P}$. 
Additionally, in the second constraint of SCP-$\hat{N}_1$, there is an extra term $\delta_{1,i}$, compared to the corresponding constraint in SCP \eqref{eq: SCP_for_one_state}. While this term generally makes the constraint more conservative, it accounts for the error introduced by approximating the expectation with the empirical mean. This adjustment is achieved using Chebyshev inequality \cite{Saw01051984} as given in Lemma~\ref{Chebyshev bound theorem for fsim-fmath} below. 

We begin by assuming bounded variance of the function $\hat{f}_i(x,u,w)$, which is required for the results in Lemma~\ref{Chebyshev bound theorem for fsim-fmath}.

\begin{assumption}
\label{variance assumption}
We assume the existence of a bound $\hat{M}^{(i)}$, such that $Var(\hat{f}_i(x,u,w)) \leq \hat{M}^{(i)}, \forall x \in X, \forall u \in U$~\cite{salamati2024data}.
\end{assumption}
Under Assumption~\ref{variance assumption}, the following theorem provides a probabilistic guarantee that the solution of SCP-$\hat{N}_1$ is also a solution of SCP.

\begin{lemma}
\label{Chebyshev bound theorem for fsim-fmath}
    Let $[\eta_i^*;q^{(1)}_i;\cdots;q^{(z_i)}_i]$ be a feasible solution of SCP-$\hat{N}_1$ \eqref{eq: SCP_for_one_state_n}. Suppose that Assumption \ref{variance assumption} holds with a given $\hat{M}^{(i)}$. Then, for some arbitrary $\delta_{1,i} > 0$, one has
    \begin{equation}  
     \label{Chebyshev inequality equation}
    \mathbb{P}\Big([\eta_i^*;q^{(1)}_i;\cdots;q^{(z_i)}_i] \models SCP\Big) \geq 1 - \beta_{1,i}
    \end{equation}
    for some $\beta_{1,i} \geq \frac{\hat{M}^{(i)}}{\delta_{1,i}^2 \hat{N}_{1}}$.
\end{lemma}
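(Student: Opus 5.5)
The plan is to apply Chebyshev's inequality to the empirical mean at each sampled pair $(x_r,u)$, and then use the triangle inequality to turn the SCP-$\hat{N}_1$ constraint into the SCP constraint on the complementary event.

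\emph{Step 1: Chebyshev at a fixed pair.} Fix $r$ and $u$, and write
\[
\bar{f}_i(x_r,u):=\frac{1}{\hat{N}_1}\sum_{k=1}^{\hat{N}_1}\hat{f}_i(x_r,u,w_k).
\]
Because the $w_k$ are i.i.d.\ draws from $\mathbb{P}$, the empirical mean is unbiased, $\mathbb{E}[\bar{f}_i(x_r,u)]=\mathbb{E}_w\hat{f}_i(x_r,u,w)$. Its variance is $Var(\hat{f}_i(x_r,u,w))/\hat{N}_1$, which Assumption~\ref{variance assumption} bounds by $\hat{M}^{(i)}/\hat{N}_1$. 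Chebyshev's inequality then gives
\[
\mathbb{P}\Big(\big|\bar{f}_i(x_r,u)-\mathbb{E}_w\hat{f}_i(x_r,u,w)\big|\geq \delta_{1,i}\Big)\leq \frac{\hat{M}^{(i)}}{\delta_{1,i}^2\hat{N}_1}\leq \beta_{1,i}.
\]

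\emph{Step 2: from SCP-$\hat{N}_1$ to SCP.} On the complementary event, the triangle inequality gives
\[
|\mathbb{E}_w\hat{f}_i-f_i|\leq |\bar{f}_i-f_i|+|\mathbb{E}_w\hat{f}_i-\bar{f}_i|< |\bar{f}_i-f_i|+\delta_{1,i}.
\]
Feasibility for SCP-$\hat{N}_1$ bounds the right-hand side by $q_i^\top p_i(x_r,u)$, so the second constraint of SCP holds at $(x_r,u)$. The first constraint, $q_i^\top p_i\leq\eta_i^*$, is identical in the two programs, so it transfers directly.

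\emph{Step 3: all $N\times M$ constraints at once (the main obstacle).} Step 1 only controls one constraint at a time, while SCP requires all of them simultaneously. A union bound would inflate the failure probability to $NM\hat{M}^{(i)}/(\delta_{1,i}^2\hat{N}_1)$, which the stated bound does not contain. My first attempt is the following reading. The samples $w_k$ are drawn independently for each $(x_r,u)$, so the deviation events are independent across pairs. Under this reading I would state the guarantee constraint-wise, which is the sense in which the paper's notation ``$\models$ SCP'' is meant: each sampled constraint of SCP holds with probability at least $1-\beta_{1,i}$. If the fully joint statement is required instead, I would replace $\beta_{1,i}$ by $NM\beta_{1,i}$ via the union bound, or equivalently enlarge $\hat{N}_1$ by a factor of $NM$.
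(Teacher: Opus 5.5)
Your argument is essentially the paper's: a pointwise Chebyshev bound at each sampled pair using $Var(\bar{f}_i(x_r,u))\le \hat{M}^{(i)}/\hat{N}_1$, followed by the triangle inequality to pass from the $\delta_{1,i}$-tightened constraint of SCP-$\hat{N}_1$ to the corresponding SCP constraint. Your Step~3 caveat is well founded, because the paper's proof also controls only one $(x_r,u)$ at a time: its ``taking the supremum'' step yields no simultaneous guarantee, and the subsequent theorem only uses the single-pair event $\mathcal{C}_2$. So the constraint-wise reading is what is actually proved, the joint version needs your $NM$ union-bound factor, and you should drop the independence remark, since independence across pairs would give $(1-\beta_{1,i})^{NM}$ rather than $1-\beta_{1,i}$.
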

\begin{proof}
Let us define the variance $\sigma^2(x,u):=Var(\frac{1}{\hat{N}_1}\sum_{k=1}^{\hat{N}_1}\hat{f}_i(x,u,w_k))$. Let $\hat{\sigma}^2:=\sup_{ x \in X, u \in U} \sigma^2(x,u)$. Under Assumption~\ref{variance assumption}, one can write the following set of inequalities $ \hat{\sigma}^2=\sup_{ x \in X, u \in U} \sigma^2(x,u) \leq \sup_{ x \in X, u \in U} \frac{\Tilde{M}(x,u)}{\hat{N}_1} \leq \frac{\hat{M}}{\hat{N}_1}$.

For a given $(x,u)\in (X \times U)$ pair, by employing Chebyshev's inequality and Markov's inequality, we have
    \begin{align*}
    &\mathbb{P}\biggl(\biggl||\mathbb{E}_w\hat{f}_i(x,u,w)\hspace{-0.2em}-\hspace{-0.2em}{{f}_i}(x,u)|\hspace{-0.2em}-\hspace{-0.2em}\bigl|\frac{1}{\hat{N}_1}\hspace{-0.4em}\sum_{k=1}^{\hat{N}_1}\hspace{-0.2em}\hat{f}_i(x,u,w_k)\hspace{-0.2em}\\&-\hspace{-0.2em}{{f}_i}(x,u)\bigr| \biggr| \hspace{-0.2em}\leq\hspace{-0.2em} \delta_{1,i}\biggr)= \mathbb{P}\biggl(\biggl||\mathbb{E}_w\hat{f}_i(x,u,w)\hspace{-0.2em}-\hspace{-0.2em}{{f}_i}(x,u)|\\&-\bigl|\frac{1}{\hat{N}_1}\sum_{k=1}^{\hat{N}_1}\hat{f}_i(x,u,w_k)-{{f}_i}(x,u)\bigr| \biggr|^2 \leq \delta_{1,i}^2\biggr)\geq 1-\frac{\sigma^2(x,u)}{\delta_{1,i}^2}.
    \end{align*}
    By taking the supremum from both sides, we have
     \begin{align*}
    &\sup_{ x \in X, u \in U}\mathbb{P}\biggl(\biggl||\mathbb{E}_w\hat{f}_i(x,u,w)\hspace{-0.2em}-\hspace{-0.2em}{{f}_i}(x,u)|\\&-\bigl|\frac{1}{\hat{N}_1}\sum_{k=1}^{\hat{N}_1}\hat{f}_i(x,u,w_k)-{{f}_i}(x_r,u)\bigr| \biggr|\hspace{-0.2em} \leq\hspace{-0.2em} \delta_{1,i}\biggr) \hspace{-0.25em} \geq \hspace{-0.5em}\sup_{ x \in X, u \in U}\hspace{-1em} 1-\frac{\sigma^2(x,u)}{\delta_{1,i}^2}\\&\geq 1 -\frac{\hat{M}}{\delta_{1,i}^2\hat{N}_1}.
\end{align*}
Since the above inequality holds for the supremum value,  \emph{i.e., for all $x \in X, u \in U$}, this will hold as follows:
\begin{align*}
    &\mathbb{P}(\biggl||\mathbb{E}_w\hat{f}_i(x,u,w)\hspace{-0.2em}-\hspace{-0.2em}{{f}_i}(x,u)|-\bigl|\frac{1}{\hat{N}_1}\sum_{k=1}^{\hat{N}_1}\hat{f}_i(x,u,w_k)\\&-{{f}_i}(x,u)\bigr| \biggr| \leq \delta_{1,i}) \geq 1 -\frac{\hat{M}}{\delta_{1,i}^2\hat{N}_1} \geq 1 - \beta_{1,i},
\end{align*}
where $ \beta_{1,i} \geq \frac{\hat{M}^{(i)}}{\delta_{1,i}^2 \hat{N}_{1}}$. 
Now, in \eqref{eq: SCP_for_one_state_n}, one can see that the expectation in the second constraint of \eqref{eq: SCP_for_one_state} is replaced by the empirical mean, along with $\delta_{1,i}$. Let $[\eta_i^*;q^{(1)}_i;\cdots;q^{(z_i)}_i]$ represent the solution of \eqref{eq: SCP_for_one_state_n}.
This will lead to 
\begin{align*}
     \mathbb{P}\Big([\eta_i^*;q^{(1)}_i;\cdots;q^{(z_i)}_i] \models SCP\Big) \geq 1 - \beta_{1,i}.
\end{align*}
\end{proof}

To transfer the results from solving SCP-$\hat{N}_1$ to ROP in~\eqref{eq: ROP_for_one_state}, the following lemma is required, which directly follows from Lemma~\ref{Chebyshev bound theorem for fsim-fmath}.
\begin{lemma}
\label{big Chebyshev bound theorem for fsim-fmath}
    Consider the chosen $\hat{N}_1$ for SCP-$\hat{N}_1$. Let Assumption~\ref{variance assumption} hold with given $\hat{M}^{(i)}$. For a given $\delta_{2,i}>0$, one gets
    \begin{align}
    \label{big Chebyshev inequality equation}&\mathbb{P}\biggl(\bigl|\mathbb{E}_w[\hat{f}_i(x,u,w)-\hat{f}_i(x_r,u,w)]-\nonumber \\&\frac{1}{{\hat{N}_1}}\sum_{k=1}^{\hat{N}_1}[\hat{f}_i(x,u,w_k)-\hat{f}_i(x_r,u,w_k)]\bigr|\leq \hspace{-0.15em}{\delta}_{2,i}\hspace{-0.15em} \biggr)\geq 1-\beta_{2,i}
\end{align}
with ${{\beta}}_{2,i} \geq \frac{2\hat{M}^{(i)}}{\delta_{2,i}^2 \hat{N}_1}$ for all $x \in X$, $x_r \in X_r$ such that, $\lVert x-x_r\rVert \leq \epsilon,$ for all $r \in \{1,2,\ldots,N\}$, for all $u \in U$.
\end{lemma}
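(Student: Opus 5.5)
The plan is to apply Chebyshev's inequality to the random variable
\[
Z:=\frac{1}{\hat{N}_1}\sum_{k=1}^{\hat{N}_1}\bigl[\hat{f}_i(x,u,w_k)-\hat{f}_i(x_r,u,w_k)\bigr],
\]
whose mean is $\mathbb{E}_w[\hat{f}_i(x,u,w)-\hat{f}_i(x_r,u,w)]$, because the $w_k$ are i.i.d.\ copies of $w$ and the same sample $w_k$ is used at $x$ and at $x_r$. This mirrors the proof of Lemma~\ref{Chebyshev bound theorem for fsim-fmath}. The only new ingredient is a uniform bound on $Var(Z)$ in terms of $\hat{M}^{(i)}$ that is valid for every admissible pair $(x,x_r)$ with $\lVert x-x_r\rVert\le\epsilon$ and every $u\in U$.

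First, fix $x\in X$, $u\in U$, and a sample $x_r$ with $\lVert x-x_r\rVert\le\epsilon$. Set $D_k:=\hat{f}_i(x,u,w_k)-\hat{f}_i(x_r,u,w_k)$. The $D_k$ are i.i.d., so $Var(Z)=Var(D_1)/\hat{N}_1$.

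Second, bound $Var(D_1)$. Write $A=\hat{f}_i(x,u,w)$ and $B=\hat{f}_i(x_r,u,w)$. Then $Var(A-B)=Var(A)+Var(B)-2\,\mathrm{Cov}(A,B)$. The naive Cauchy--Schwarz bound $|\mathrm{Cov}(A,B)|\le\sqrt{Var(A)Var(B)}$ only gives $Var(A-B)\le 4\hat{M}^{(i)}$, which is twice the stated constant. This is the main obstacle.

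I see two routes to the constant $2\hat{M}^{(i)}$. The first is to argue that the covariance term is nonnegative. This is natural because $A$ and $B$ are driven by the same noise realization, so it is plausible that $\mathrm{Cov}(A,B)\ge0$, which gives $Var(A-B)\le Var(A)+Var(B)\le 2\hat{M}^{(i)}$. The second is to treat the samples at $x$ and at $x_r$ as independent draws, consistent with how data are collected in the simulator, so that the covariance vanishes. I would adopt whichever the data-collection model supports, most likely the independence reading, and state it explicitly. With either route, Assumption~\ref{variance assumption} applied at both $x$ and $x_r$ gives $Var(Z)\le 2\hat{M}^{(i)}/\hat{N}_1$.

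Third, apply Chebyshev's inequality:
\[
\mathbb{P}\bigl(|Z-\mathbb{E}Z|\le\delta_{2,i}\bigr)\ge 1-\frac{Var(Z)}{\delta_{2,i}^2}\ge 1-\frac{2\hat{M}^{(i)}}{\delta_{2,i}^2\hat{N}_1}\ge 1-\beta_{2,i},
\]
using the condition on $\beta_{2,i}$. The variance bound does not depend on $(x,x_r,u)$, so the same probability bound holds for every $x\in X$, every $x_r$ within distance $\epsilon$ of $x$, every $r$, and every $u\in U$. The uniformity then follows exactly as the supremum step in the proof of Lemma~\ref{Chebyshev bound theorem for fsim-fmath}, which completes the argument.
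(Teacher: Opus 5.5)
Your outline is the paper's. You apply Chebyshev to the sample mean $Z$ of the coupled differences, use $\mathrm{Var}(Z)=\mathrm{Var}(D_1)/\hat N_1$, and need the bound $\mathrm{Var}(D_1)\le 2\hat M^{(i)}$. The paper settles that last bound in one line by citing $\mathrm{Var}(X-Y)\le 2\mathrm{Var}(X)$ for i.i.d.\ $X,Y$, which is essentially your independence route. The step you flagged, however, is a genuine gap, both in your proposal and in that justification. In the statement both terms use the same $w_k$. So $A=\hat f_i(x,u,w)$ and $B=\hat f_i(x_r,u,w)$ are neither independent nor identically distributed, and Assumption~\ref{variance assumption} says nothing about the sign of $\mathrm{Cov}(A,B)$.

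Your first route fails in general. In one dimension, with $x_r$ a sample point and $x_r+\epsilon\in X$, let $\hat f(x,u,w)=f(x,u)+\cos\bigl(\pi(x-x_r)/\epsilon\bigr)\,w$ with $\mathrm{Var}(w)=\hat M$; this satisfies Assumption~\ref{variance assumption}. At $x=x_r+\epsilon$ you get $A-B=f(x,u)-f(x_r,u)-2w$, so $\mathrm{Cov}(A,B)=-\hat M$ and $\mathrm{Var}(A-B)=4\hat M$. Now let $w=\pm a$ with probability $p/2$ each and $w=0$ otherwise, so $\hat M=pa^2$. For $\hat N_1=1$ and any $\delta_{2,i}\in(\sqrt{2}\,a,2a)$, the event in \eqref{big Chebyshev inequality equation} has probability exactly $1-p<1-2\hat M/\delta_{2,i}^2$. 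For larger $\hat N_1$, take $p$ small and $\delta_{2,i}$ just below $2a/\hat N_1$. So for the coupled estimator the constant $2$ does not merely lack a proof from the stated hypotheses; it is false.

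Your second route proves a statement about a different random variable, with an independent draw $w_k'$ in the $x_r$ term. That contradicts your own first paragraph. The coupling is also not incidental. Theorem~\ref{main_theorem} needs the same $w_k$ at $x$ and $x_r$ to bound $\bigl|\frac{1}{\hat N_1}\sum_k[\hat f_i(x,u,w_k)-\hat f_i(x_r,u,w_k)]\bigr|$ by $\hat{\mathcal L}^{(i)}_{\hat f}\lVert x-x_r\rVert$ via Assumption~\ref{big Lipschitz continuity assumption}. Decoupling would therefore break the way the lemma is used.

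What you can actually prove for the statement as written is the version with $4\hat M^{(i)}$, i.e.\ $\beta_{2,i}\ge 4\hat M^{(i)}/(\delta_{2,i}^2\hat N_1)$, from the Cauchy--Schwarz bound you already derived. Keeping the factor $2$ requires an explicit extra hypothesis, such as $\mathrm{Cov}\bigl(\hat f_i(x,u,w),\hat f_i(x_r,u,w)\bigr)\ge 0$ for all admissible pairs.

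The rest of your proposal is correct. This covers the mean of $Z$, the identity $\mathrm{Var}(Z)=\mathrm{Var}(D_1)/\hat N_1$, and validity for each fixed $(x,x_r,u)$ because the variance bound does not depend on the point. That direct pointwise argument is all you need; you do not have to route it through the supremum step of Lemma~\ref{Chebyshev bound theorem for fsim-fmath}.
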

\begin{proof}
    The above equation is true from the fact that $Var(X-Y) \leq 2Var(X)$ if $X$ and $Y$ are i.i.d. random variables of the same probability space. Using the inequality $Var\bigl(\hat{f}_i(x,u,w)-\hat{f}_i(x_r,u,w)\bigr) \leq 2\hat{M}^{(i)}$ and following a similar procedure of proof of Lemma~\ref{Chebyshev bound theorem for fsim-fmath}, one would be able to obtain the result in~\eqref{big Chebyshev inequality equation}.
\end{proof}

\section{Formal Quantification of Stochastic Simulation-Gap Function}
In this section, we formally quantify the stochastic simulation-gap function. To achieve this, we make the following assumption.

\begin{assumption}
\label{big Lipschitz continuity assumption}
    Let the empirical mean of the random variable $\hat{f}_i(x,u,w)$, the deterministic functions $f_i(x,u)$, and $q_i^\top p_i(x,u)$ be Lipschitz continuous in $x$ with Lipschitz constants $\hat{\mathcal{L}}_{\hat{f}}^{(i)}$, $\mathcal{L}_f^{(i)}$, and $\mathcal{L}_{2}^{(i)}$, respectively.
\end{assumption}   
\begin{remark}
    Assumption~\ref{big Lipschitz continuity assumption} is not restrictive, as the Lipschitz constant can be estimated from the data \cite{Wood1996EstimationOT,10066195,samari2024data}.
\end{remark}

Under Assumption~\ref{big Lipschitz continuity assumption}, the following theorem establishes a guarantee that the solution of SCP-$\hat{N}_1$ remains valid for the original ROP \eqref{eq: ROP_for_one_state} with a specified confidence level.
\begin{theorem}
\label{main_theorem}
    Consider Assumption \ref{big Lipschitz continuity assumption}, the $\hat{N}_1$ selected for SCP-$\hat{N}_1$, some arbitrary constant $\delta_{2,i}>0$, and the solution of SCP-$\hat{N}_1$, $q_i=[q_i^{(1)};\cdots;q_i^{(z_i)}]$ satisfying Lemma~\ref{Chebyshev bound theorem for fsim-fmath}. Then, for all $i \in I$, for the mathematical model in \eqref{disc_mathematical_model} and the simulator model in \eqref{simulator_model}, one has
\begin{equation}
\mathbb{P}\Bigl(\bigl|\mathbb{E}_{w}\hat{f}_i(x,u,w)-{f}_i(x,u)\bigr| \leq\gamma_i(x,u) \Bigr) \geq  1-{\beta}_{1,i}-\beta_{2,i},
\label{big actual_gamma_with probability result}
\end{equation}
for all $x \in X$, for all $u \in U$, for all $w \in V_w$,
where \begin{equation}
\label{actual_gamma}
    \gamma_i(x,u):=\mathcal{L}_{x}^{(i)} \epsilon +q_i^\top p_i(x,u)+{\delta}_{2,i}
\end{equation}
with $\mathcal{L}_{x}^{(i)}=\mathcal{L}_f^{(i)}+\hat{\mathcal{L}}_{\hat{f}}^{(i)}+\mathcal{L}_{2}^{(i)}$, $\epsilon>0$ as defined in \eqref{max_epsilon}, and $\beta_{1,i}$ and $\beta_{2,i}$ defined as in \eqref{Chebyshev inequality equation} and \eqref{big Chebyshev inequality equation}.
\end{theorem}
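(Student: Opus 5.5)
Fix $i\in I$, $u\in U$ and an arbitrary $x\in X$. Since the balls $X_r$ cover $X$, we can pick an index $r$ with $\lVert x-x_r\rVert\le\epsilon$. The plan is a chain of triangle inequalities that moves the quantity $|\mathbb{E}_w\hat f_i(x,u,w)-f_i(x,u)|$ from $x$ to the sample $x_r$. At $x_r$ we bound it using the SCP-$\hat N_1$ constraint together with Lemma~\ref{Chebyshev bound theorem for fsim-fmath}, and then we move the bound $q_i^\top p_i(x_r,u)$ back to $x$ using Lipschitz continuity. Concretely, I would first split
\begin{align*}
\bigl|\mathbb{E}_w\hat f_i(x,u,w)-f_i(x,u)\bigr| \le{}& \bigl|\mathbb{E}_w[\hat f_i(x,u,w)-\hat f_i(x_r,u,w)]\bigr| + \bigl|f_i(x,u)-f_i(x_r,u)\bigr|\\
&+\bigl|\mathbb{E}_w\hat f_i(x_r,u,w)-f_i(x_r,u)\bigr|.
\end{align*}
The second term is at most $\mathcal{L}_f^{(i)}\epsilon$ by Assumption~\ref{big Lipschitz continuity assumption}.

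For the first term, I would add and subtract the empirical mean $\frac{1}{\hat N_1}\sum_k[\hat f_i(x,u,w_k)-\hat f_i(x_r,u,w_k)]$. On the event in Lemma~\ref{big Chebyshev bound theorem for fsim-fmath}, the gap between the true and empirical differences is at most $\delta_{2,i}$. The empirical difference itself is at most $\hat{\mathcal{L}}_{\hat f}^{(i)}\epsilon$, since Assumption~\ref{big Lipschitz continuity assumption} makes the empirical mean Lipschitz in $x$. Hence, on an event of probability at least $1-\beta_{2,i}$, the first term is at most $\hat{\mathcal{L}}_{\hat f}^{(i)}\epsilon+\delta_{2,i}$.

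For the third term, Lemma~\ref{Chebyshev bound theorem for fsim-fmath} says that with probability at least $1-\beta_{1,i}$ the SCP-$\hat N_1$ solution is feasible for the SCP~\eqref{eq: SCP_for_one_state}. This gives $|\mathbb{E}_w\hat f_i(x_r,u,w)-f_i(x_r,u)|\le q_i^\top p_i(x_r,u)$. By Lipschitz continuity of $q_i^\top p_i$, we have $q_i^\top p_i(x_r,u)\le q_i^\top p_i(x,u)+\mathcal{L}_2^{(i)}\epsilon$. Summing the three bounds gives exactly $\gamma_i(x,u)=\mathcal{L}_x^{(i)}\epsilon+q_i^\top p_i(x,u)+\delta_{2,i}$.

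It remains to combine the probabilities. Call $A$ the event of Lemma~\ref{Chebyshev bound theorem for fsim-fmath} and $B$ the event of Lemma~\ref{big Chebyshev bound theorem for fsim-fmath}. The desired inequality holds on $A\cap B$, and the union bound gives $\mathbb{P}(A\cap B)\ge 1-\mathbb{P}(A^c)-\mathbb{P}(B^c)\ge 1-\beta_{1,i}-\beta_{2,i}$. Since $x$ and $u$ were arbitrary, and the bounds in both lemmas hold uniformly over $X\times U$ and over the cover, the conclusion holds for all $x\in X$ and $u\in U$.

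The main obstacle is this probability bookkeeping. Both lemmas are stated pointwise in $(x,u)$ or per sample, while the theorem's "for all $x$" suggests a uniform event. I would handle this as the paper does: state the result for each fixed $(x,u)$, with the randomness coming from the samples $w_k$. Note also that the random samples used for the empirical mean at $x$ are the same as those at $x_r$, which is what makes the Lipschitz step for the empirical mean legitimate. I would not attempt a uniform-in-$x$ event beyond what the lemmas already provide.
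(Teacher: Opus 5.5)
Your proposal is correct and takes essentially the same route as the paper's proof. It uses the same three-term triangle-inequality split around a cover point $x_r$ with $\lVert x-x_r\rVert\le\epsilon$. Lemma~\ref{big Chebyshev bound theorem for fsim-fmath} plus the Lipschitz empirical mean bounds the first term by $\hat{\mathcal{L}}_{\hat f}^{(i)}\epsilon+\delta_{2,i}$, Lemma~\ref{Chebyshev bound theorem for fsim-fmath} plus Lipschitz continuity of $q_i^\top p_i$ handles the sample-point term, and the union bound on the intersection of the two good events gives $1-\beta_{1,i}-\beta_{2,i}$ pointwise in $(x,u)$.
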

\begin{proof}
For a given $i\in I$, for all $x\in X$, for any $x_r\in X_r$ such that $\|x-x_r\|\leq\epsilon$, and for all $u\in U$, we have
     \begin{align*}
     &|\mathbb{E}_{w}\hat{f}_i(x,u,w)-{f}_i(x,u)|= |\mathbb{E}_w\hat{f}_i(x,u,w)-{f}_i(x,u)\\&   -\mathbb{E}_w\hat{f}_i({x}_r,{u},{w})+f_i({x}_r,{u})+ \mathbb{E}_w\hat{f}_i({x}_r,{u},{w})-{f}_i({x}_r,{u})|\\
     & \leq |{f}_i(x,u)-f_i({x}_r,{u})|+|\mathbb{E}_w\hat{f}_i(x,u,w)-\mathbb{E}_w\hat{f}_i({x}_r,{u},{w})|\\&\quad +|\mathbb{E}_w\hat{f}_i({x}_r,{u},{w})-{f}_i({x}_r,{u})|.
\end{align*}
Using the result of Lemma~\ref{Chebyshev bound theorem for fsim-fmath}, Lemma~\ref{big Chebyshev bound theorem for fsim-fmath}, and Assumption~\ref{big Lipschitz continuity assumption}, the above inequality is reformulated as follows, holding with a probability that is quantified later:
\begin{align*}
&|\mathbb{E}_{w}\hat{f}_i(x,u,w)-{f}_i(x,u)|
\leq |{f}_i(x,u)-f_i({x}_r,{u})|+\\&\biggl|\frac{1}{\hat{N}_1}\hspace{-0.4em}\sum_{k=1}^{\hat{N}_1}[\hat{f}_i(x,u,w_k)\hspace{-0.2em}-\hspace{-0.2em}\hat{f}_i(x_r,u,w_k)]\biggr|\hspace{-0.2em}+\hspace{-0.2em}\delta_{2,i}+q_i^\top p_i(x_r,u)\\
    & \leq \mathcal{L}_f^{(i)}\lVert x\hspace{-0.2em}-\hspace{-0.2em}x_r \rVert  +\hat{\mathcal{L}}_{\hat{f}}^{(i)}\lVert x\hspace{-0.2em}-\hspace{-0.2em}{x}_r \rVert\hspace{-0.2em}+\hspace{-0.2em}q_i^\top p_i({x}_r,{u}_r)\hspace{-0.2em}-\hspace{-0.2em}q_i^\top p_i(x,u)+\\& \quad \quad q_i^\top p_i(x,u)+{\delta}_{2,i}\\
     & \leq \mathcal{L}_f^{(i)}\lVert x\hspace{-0.1em}-\hspace{-0.1em}x_r \rVert  \hspace{-0.2em}+\hat{\mathcal{L}}_{\hat{f}}^{(i)}\lVert x\hspace{-0.2em}-\hspace{-0.2em}{x}_r \rVert\hspace{-0.2em}+ \hspace{-0.2em}\mathcal{L}_{2}^{(i)}\lVert x\hspace{-0.2em}-\hspace{-0.2em}{x}_r \rVert
\hspace{-0.2em}+\hspace{-0.2em}q_i^\top p_i(x,u)+{\delta}_{2,i}\\
      &\leq \mathcal{L}_{x}^{(i)} \epsilon +q_i^\top p_i(x,u)+{\delta}_{2,i}:=\gamma_i(x,u),
\end{align*}
where $\mathcal{L}_{x}^{(i)}= \mathcal{L}_f^{(i)}+\hat{\mathcal{L}}_{\hat{f}}^{(i)}+\mathcal{L}_{2}^{(i)}$. The above result holds with a certain probability. To quantify this, let us define the following events as $\mathcal{C}_1=\Big\{\bigl||\mathbb{E}_w\hat{f}_i(x,u,w)-\mathbb{E}_w\hat{f}_i(x_r,u,w)|-\frac{1}{\hat{N}_1}\sum_{k=1}^{\hat{N}_1}|\hat{f}_i(x,u,w_k)-\hat{f}_i(x_r,u,w_k)|\bigr| \leq {\delta}_{2,i}\Big\},\mathcal{C}_2=\Big\{ \bigl|\mathbb{E}_{{w}}\hat{f}_i(x_r,u,{w})-f_i(x_r,u)\bigr|\leq q_i^\top p(x_r,u)\Big\},\mathcal{C}_3=\Big\{\bigl|\mathbb{E}_{w}\hat{f}_i(x,u,w)-{f}_i(x,u)\bigr|  \leq \gamma_i(x,u)\Big\}.$
Since $\mathcal{C}_1 \cap \mathcal{C}_2 \subseteq \mathcal{C}_3$, it follows that $\mathbb{P}(\mathcal{C}_3) \geq \mathbb{P}(\mathcal{C}_1 \cap \mathcal{C}_2)$. By Lemma~\ref{Chebyshev bound theorem for fsim-fmath}, $\mathbb{P}(\mathcal{C}_2) \geq 1 - \beta_{1,i}$, and by \eqref{big Chebyshev inequality equation}, $\mathbb{P}(\mathcal{C}_1) \geq 1 - \beta_{2,i}$. Applying the union bound $\mathbb{P}(\mathcal{C}_1 \cup \mathcal{C}_2) \leq \mathbb{P}(\mathcal{C}_1)+\mathbb{P}(\mathcal{C}_2)$, we obtain $\mathbb{P}(\mathcal{C}_1 \cap \mathcal{C}_2) \geq 1-\mathbb{P}(\mathcal{C}_1^c) -\mathbb{P}(\mathcal{C}_2^c)\geq 1-{\beta}_{1,i}-\beta_{2,i}$. Hence, $\mathbb{P}(\mathcal{C}_3) \geq 1 - \beta_{1,i} - \beta_{2,i}$, and accordingly, we have
$\mathbb{P}\bigl(\bigl|\mathbb{E}_{w}\hat{f}_i(x,u,w)-{f}_i(x,u)\bigr| \leq\gamma_i(x,u) \bigr) \geq  1-{\beta}_{1,i}-\beta_{2,i}$.
\end{proof} 
The above result gives a probabilistic guarantee for the $i$-th state; the next theorem extends it to the full system dynamics.
\begin{theorem}
\label{total_probability_result_theorem}
Given the result of Theorem~\ref{main_theorem} for the mathematical model in \eqref{disc_mathematical_model} and the simulator model in \eqref{simulator_model}, the following result holds true:
\begin{align}
\label{controller_design_equation}
    \mathbb{P}\biggl(\Big\{\mathbb{E}_{w} \hat{f}(x,u,w)&\in f(x,u)+[-\gamma(x,u),\gamma(x,u)]\Big\}\biggr)\nonumber \\& \geq 1-\sum_{i=1}^n(\beta_{1,i}+\beta_{2,i}).
\end{align}
\end{theorem}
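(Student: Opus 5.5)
The plan is a componentwise union bound over $i \in I=\{1,\ldots,n\}$, applied on top of Theorem~\ref{main_theorem}. First, I would rewrite the vector event on the left-hand side of \eqref{controller_design_equation} as an intersection of scalar events. Write
\[
\mathcal{E}_i:=\Bigl\{\bigl|\mathbb{E}_{w}\hat{f}_i(x,u,w)-f_i(x,u)\bigr|\leq\gamma_i(x,u)\Bigr\}.
\]
The interval $[-\gamma(x,u),\gamma(x,u)]$ is understood componentwise, i.e.\ as the box $\prod_{i=1}^n[-\gamma_i(x,u),\gamma_i(x,u)]$. So the inclusion $\mathbb{E}_{w}\hat{f}(x,u,w)\in f(x,u)+[-\gamma(x,u),\gamma(x,u)]$ holds exactly when every component satisfies $|\mathbb{E}_w\hat{f}_i-f_i|\leq\gamma_i$. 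In other words, the event in question is $\bigcap_{i=1}^n\mathcal{E}_i$.

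Second, I would invoke Theorem~\ref{main_theorem} for each $i\in I$. It gives $\mathbb{P}(\mathcal{E}_i)\geq 1-\beta_{1,i}-\beta_{2,i}$, equivalently $\mathbb{P}(\mathcal{E}_i^c)\leq\beta_{1,i}+\beta_{2,i}$, uniformly over $x\in X$ and $u\in U$. The randomness here comes from the samples $w_k$ used in SCP-$\hat{N}_1$, and hence from the resulting $q_i$ and the event defining $\gamma_i$. Different components $i$ may reuse the same simulator data, so I do not assume any independence across $i$. This is exactly why the union bound is the right tool.

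Third, by De Morgan and Boole's inequality,
\[
\mathbb{P}\Bigl(\bigcap_{i=1}^n\mathcal{E}_i\Bigr)=1-\mathbb{P}\Bigl(\bigcup_{i=1}^n\mathcal{E}_i^c\Bigr)\geq 1-\sum_{i=1}^n\mathbb{P}(\mathcal{E}_i^c)\geq 1-\sum_{i=1}^n(\beta_{1,i}+\beta_{2,i}).
\]
This yields \eqref{controller_design_equation}, and it holds for all $x\in X$ and $u\in U$ because each per-component bound does. This mirrors the $\mathcal{C}_1\cap\mathcal{C}_2$ step in the proof of Theorem~\ref{main_theorem}.

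The only delicate point is the first step: making explicit that the box inclusion is equivalent to the intersection of the componentwise events, and that all $\mathcal{E}_i$ live on one common probability space (that of the sampled noise), so the union bound applies. Once that is fixed, the rest is routine. The bound is trivially true, and uninformative, if $\sum_{i}(\beta_{1,i}+\beta_{2,i})\geq 1$.
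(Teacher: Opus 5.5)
Your proposal is correct and follows the paper's proof essentially step for step: identify the box event with $\bigcap_{i=1}^n\mathcal{E}_i$, take $\mathbb{P}(\mathcal{E}_i)\geq 1-\beta_{1,i}-\beta_{2,i}$ from Theorem~\ref{main_theorem}, and apply De Morgan plus Boole's inequality, with no independence assumed across components. Your final step is slightly more careful than the paper's, which writes $1-\sum_i\mathbb{P}(\mathcal{E}_i^c)=1-\sum_i(\beta_{1,i}+\beta_{2,i})$ where only $\geq$ is justified.
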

\begin{proof}
For a given $i \in I$, $x \in X$, $u \in U$, and $w \in V_w$, let us define the following events:
$\mathcal{E}_i=\Big\{\mathbb{E}_{w}\hat{f}_i(\hspace{-0.1em}x\hspace{-0.1em},\hspace{-0.1em}u\hspace{-0.1em},\hspace{-0.1em}w\hspace{-0.1em}) \hspace{-0.2em}\in\hspace{-0.2em} {f}_i(\hspace{-0.1em}x\hspace{-0.1em},\hspace{-0.1em}u\hspace{-0.1em})+[-\gamma_i(\hspace{-0.1em}x\hspace{-0.1em},\hspace{-0.1em}u\hspace{-0.1em}),\gamma_i(\hspace{-0.1em}x\hspace{-0.1em},\hspace{-0.1em}u\hspace{-0.1em})]\Big\}$, 
$\mathcal{E}=\bigcap_{i=1}^n\Big\{\mathbb{E}_{w}\hat{f}_i(\hspace{-0.1em}x\hspace{-0.1em},\hspace{-0.1em}u\hspace{-0.1em},\hspace{-0.1em}w\hspace{-0.1em}) \hspace{-0.2em}\in\hspace{-0.2em} {f}_i(\hspace{-0.1em}x\hspace{-0.1em},\hspace{-0.1em}u\hspace{-0.1em})+[-\gamma_i(\hspace{-0.1em}x\hspace{-0.1em},\hspace{-0.1em}u\hspace{-0.1em}),\gamma_i(\hspace{-0.1em}x\hspace{-0.1em},\hspace{-0.1em}u\hspace{-0.1em})]\Big\}$.
It is clear that $\mathcal{E}=\Big\{\mathbb{E}_{w}\hat{f}(\hspace{-0.1em}x\hspace{-0.1em},\hspace{-0.1em}u\hspace{-0.1em},\hspace{-0.1em}w\hspace{-0.1em}) \hspace{-0.2em}\in\hspace{-0.2em} {f}(\hspace{-0.1em}x\hspace{-0.1em},\hspace{-0.1em}u\hspace{-0.1em})+[-\gamma(\hspace{-0.1em}x\hspace{-0.1em},\hspace{-0.1em}u\hspace{-0.1em}),\gamma(\hspace{-0.1em}x\hspace{-0.1em},\hspace{-0.1em}u\hspace{-0.1em})]\Big\}$.
From Theorem~\ref{main_theorem}, for all $i \in I$, we have 
$\mathbb{P}(\mathcal{E}_i)\geq 1-\beta_{1,i}-\beta_{2,i}$. We now compute the concurrent occurrences of events $\mathcal{E}_i$, namely $\mathbb{P}(\cap_{i=1}^n \mathcal{E}_i)$, as $
    \mathbb{P}(\mathcal{E}) =\mathbb{P}(\cap_{i=1}^n \mathcal{E}_i)=1-\mathbb{P}(\cup_{i=1}^n \mathcal{E}_i^c) \geq 1-\sum_{i=1}^n\mathbb{P}(\mathcal{E}_i^c) = 1-\sum_{i=1}^n(\beta_{1,i}+\beta_{2,i})$,
which completes the proof.
\end{proof}
Following Theorem \ref{total_probability_result_theorem}, the computed $\gamma(x,u)$ is statistically guaranteed via Chebyshev’s inequality to be valid with some confidence without requiring repeated trials, aligning with standard practice in learning-based control frameworks. Moreover, in our approach, one can notice that the soundness of the result is ensured through the probabilistic guarantees provided in Theorem \ref{total_probability_result_theorem}, and the completeness is inherently ensured by the construction of the framework, as seen in Remark \ref{gamma_triviality_remark}.
\begin{remark}
    The simulation gap $\gamma_i(x,u)$, which provides formal guarantees over the entire state space, depends on parameters such as $\delta_{2,i}$, $\epsilon$, and $\mathcal{L}_x^{(i)}$, as in \eqref{actual_gamma} and holds with probability at least $1 - \beta_{1,i} - \beta_{2,i}$ as in \eqref{big actual_gamma_with probability result}. Its conservativeness can be reduced by choosing a smaller $\epsilon$ (denser state-space sampling, resulting in a higher number of constraints in the SCP) and a smaller $\delta_{2,i}$, which can be achieved by increasing $\hat{N}_1$ for a fixed variance $\hat{M}^{(i)}$ and confidence level $\beta_{2,i}$. Similarly, for a desired confidence $\beta_{1,i}$, the bound associated with $\delta_{1,i}$ can be tightened by increasing $\hat{N}_1$. Although a larger $\hat{N}_1$ implies more data collection, simulators allow parallel sampling, making the process efficient given adequate computational resources.
\end{remark}

\subsection{Controller Synthesis}
\label{controller synthesis section}
This subsection presents controller design using the mathematical model and the stochastic simulation-gap function in \eqref{actual_gamma}, ensuring specification satisfaction in the simulator with probabilistic guarantees.
Using Theorem~\ref{total_probability_result_theorem}, the simulator dynamics (in its first moment) is represented as
\begin{equation}
    \label{basic_bound_equation}
   { \mathbb{E}_w\hat{f}(x,u,w) \in {f}(x,u)+[-\gamma(x,u),\gamma(x,u)]},
\end{equation}
which holds probabilistically, as given in \eqref{controller_design_equation}. The right-hand-side expression can be viewed as an uncertain system, a well-studied topic in control design. Examples include adaptive control \cite{singh2022adaptive}, sliding mode controller \cite{kim2000designing,utkin2013sliding}, and symbolic control \cite{tabuada2009verification}. In our work, we have used symbolic controllers \cite{Reissig_2017} for our case studies, utilizing the SCOTS toolbox \cite{rungger2016scots} {to design a controller for the system. Using Theorem~\ref{total_probability_result_theorem}, we can guarantee that the expected trajectory will satisfy the specification}. Symbolic abstraction techniques work by approximating concrete systems with symbolic models, employing automata-theoretic methods for satisfying complex desired specifications. Implementation details are omitted for brevity; see \cite{Reissig_2017,rungger2016scots} for more details.

\section{Case Studies}
\label{case studies subsection}
We validate the proposed data-driven approach on a TurtleBot and a nonlinear pendulum using MATLAB for the mathematical model and Gazebo/PyBullet for high-fidelity stochastic simulations. The experiments were performed on an AMD Ryzen 9 5950X with 128 GB RAM.

\textbf{A. TurtleBot.}
The first case study considers a TurtleBot, mathematically modelled by a deterministic system as: $x_1^+=x_1+\tau u_1\cos{x_3}$, $x_2^+=x_2+\tau u_1\sin{x_3}$, $x_3^+=x_3+\tau u_2$,
where $x_1$, $x_2$, and $x_3$ denote the position in the x and y axes and the orientation of the bot, respectively. The parameters $u_1$ and $u_2$ denote the linear and angular input velocities of the bot, respectively. The sampling time is chosen as $\tau=0.01s$. The TurtleBot model in a Gazebo simulator (stochastic simulator model) is shown in Fig. \ref{fig:  Turtlebot gazebo}. The state and finite input sets are, respectively, considered as $X=[0,4]\times[0,4]\times[-\pi,\pi]$ and $U=\{-1,-0.9,\ldots,0.9,1\} \times \{-1,-0.9,\ldots,0.9,1\}$. 
\begin{figure}
    \centering
    \hspace{-.2em}\includegraphics[scale=0.1]{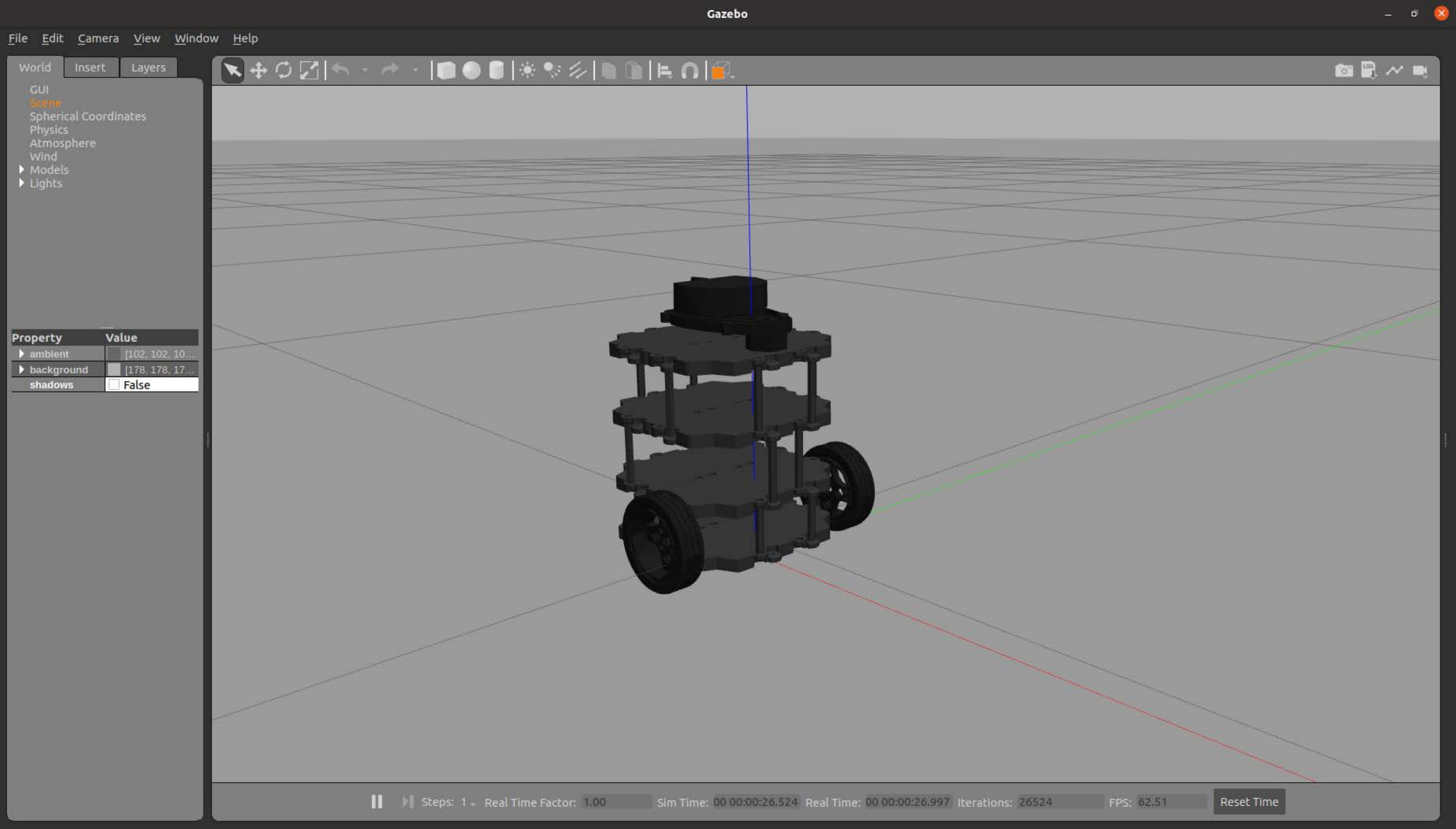}
    \caption{TurtleBot model in the Gazebo simulator.}
    \label{fig: Turtlebot gazebo}
\end{figure}
\begin{figure}
    \centering
    \includegraphics[width=1\linewidth]{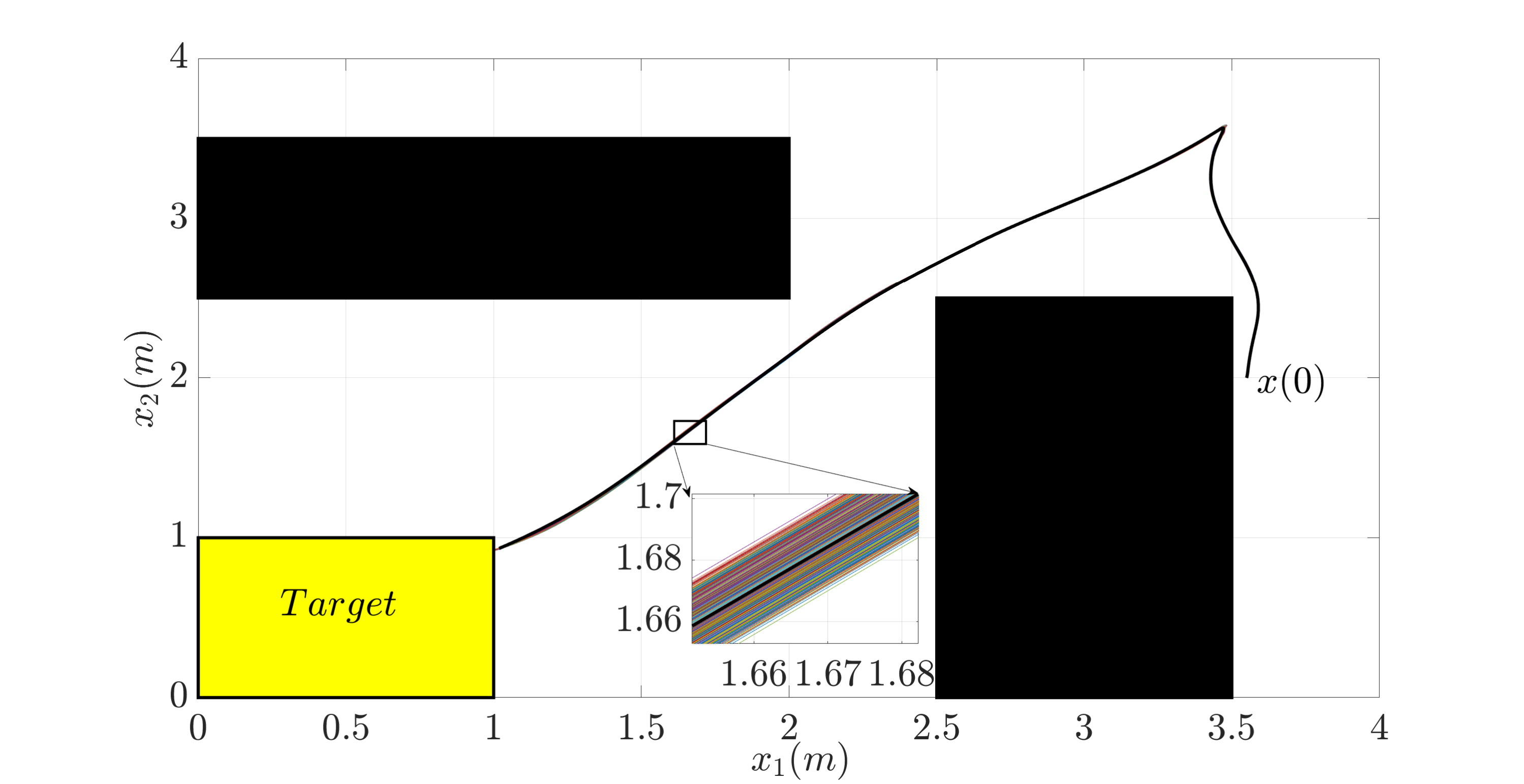}
    \caption{The realization of 1000 state trajectories from the initial state $(x,y,\theta) = (3.5,2,1.5)$. Black regions indicate obstacles. The controller designed using the mathematical model, along with the stochastic simulation gap, ensures the satisfaction of the specification in the stochastic simulator for all 1000 trajectories, even though the theorem guaranteed satisfaction in the first-moment sense. The dark black line shows the average trajectory.}
    \label{fig: turtlebot trajectories}
\end{figure}
For each sampled pair $(x_r,u)$ using $\epsilon=0.017$ as in \eqref{max_epsilon}, we run the mathematical model and the simulator from $x(0):=x_r$ with constant input $u$ for a duration $\tau$,  recording the state evolution at $\tau$, for all $r \in \{1,2,\ldots,N\}$. This process is repeated for $\hat{N}_1=1000$ in the simulator to capture stochastic variations. The value of $\hat{M}^{(1)}=\hat{M}^{(2)}=0.007,\hat{M}^{(3)}=0.0101$ which is calculated using empirical variance estimation method \cite{maurer2009empirical} and a conservative global bound is taken (\emph{e.g.} 10 times). We fix $\delta_{1,1}=0.01$, $\delta_{2,1}=0.01$, $\delta_{1,2}=0.01$, $\delta_{2,2}=0.01$, $\delta_{1,3}=0.05$, $\delta_{2,3}=0.05$ and obtain $\beta_{1,1}=0.0028$, $\beta_{2,1}=0.0056$, $\beta_{1,2}=0.0028$, $\beta_{2,2}=0.0056$, $\beta_{1,3}=0.0079$, $\beta_{2,3}=0.0158$ (according to Lemma~\ref{Chebyshev bound theorem for fsim-fmath} and Lemma~\ref{big Chebyshev bound theorem for fsim-fmath}). The structure of $\gamma_i(x,u)$ was fixed as $q_i^{(1)}x_1+q_i^{(2)}x_2+q_i^{(3)}x_3+q_i^{(4)}u_1+q_i^{(5)}u_2+q_i^{(6)}$, $i\in \{1,2,3\}$. We solve SCP-$\hat{N}_1$ in \eqref{eq: SCP_for_one_state_n} and obtain $q_1^{(4)}=0.00081,q_1^{(6)}=0.01,q_2^{(4)}=0.0001747,q_2^{(6)}=0.01,q_3^{(3)}=0.00011,q_3^{(5)}=-0.000145,q_3^{(6)}=0.0301$, and $q_1^{(1)}=q_2^{(1)}=q_3^{(1)}=q_1^{(2)}=q_2^{(2)}=q_3^{(2)}=q_1^{(3)}=q_2^{(3)}=q_1^{(4)}=q_2^{(4)}=q_1^{(5)}=q_2^{(5)} \approx 0$. We estimate the Lipschitz constants $\mathcal{L}_f^{(1)}=0.0151,\hat{\mathcal{L}}_{\hat{f}}^{(1)}=0.0856,\mathcal{L}_2^{(1)}=0.0009,\mathcal{L}_f^{(2)}=0.01,\hat{\mathcal{L}}_{\hat{f}}^{(2)}=0.1258,\mathcal{L}_2^{(2)}=0.0002,\mathcal{L}_f^{(3)}=1.003,\hat{\mathcal{L}}_{\hat{f}}^{(3)}=1.1675,\mathcal{L}_2^{(3)}=0.0001453$ according to \cite{10066195}. We then design simulation-gap functions $\gamma_1(x,u)=0.00081u_1+0.112$, $\gamma_2(x,u)=0.0001747u_1+0.112$, and $\gamma_3(x,u)=-0.0001x_3-0.0001u_2+0.168$, which individually hold with a probability of $0.9916$, $0.9916$, $0.9763$ (in the sense of Theorem~\ref{main_theorem}). Incorporating $\gamma(x,u)$ into the controller synthesis ensures the expected trajectory in the stochastic simulator reaches the target while avoiding obstacles, satisfying the desired specification with a $95.95\%$ guarantee (as per Theorem~\ref{total_probability_result_theorem}). Unlike \cite{sangeerth2024towards}, which led to 44 collisions in 1000 runs (see Fig.~\ref{fig: with det gamma failure}), our method accounts for simulator stochasticity, resulting in zero collisions across all runs, which can be seen in Fig.~\ref{fig: turtlebot trajectories}. Data collection took around 8.7 hours, and the optimization completed in 5 seconds. The data collection step could be parallelized based on available resources.

\textbf{B. Pendulum System.} The second case study considers a pendulum system whose deterministic mathematical model is given as follows: $x_1^+=x_1+\tau x_2$, $x_2^+= -\frac{3g\tau}{2l}\sin{x_1}+x_2+\frac{3\tau u}{ml^2}$,
\begin{figure}
\centering
  \includegraphics[scale=0.2]{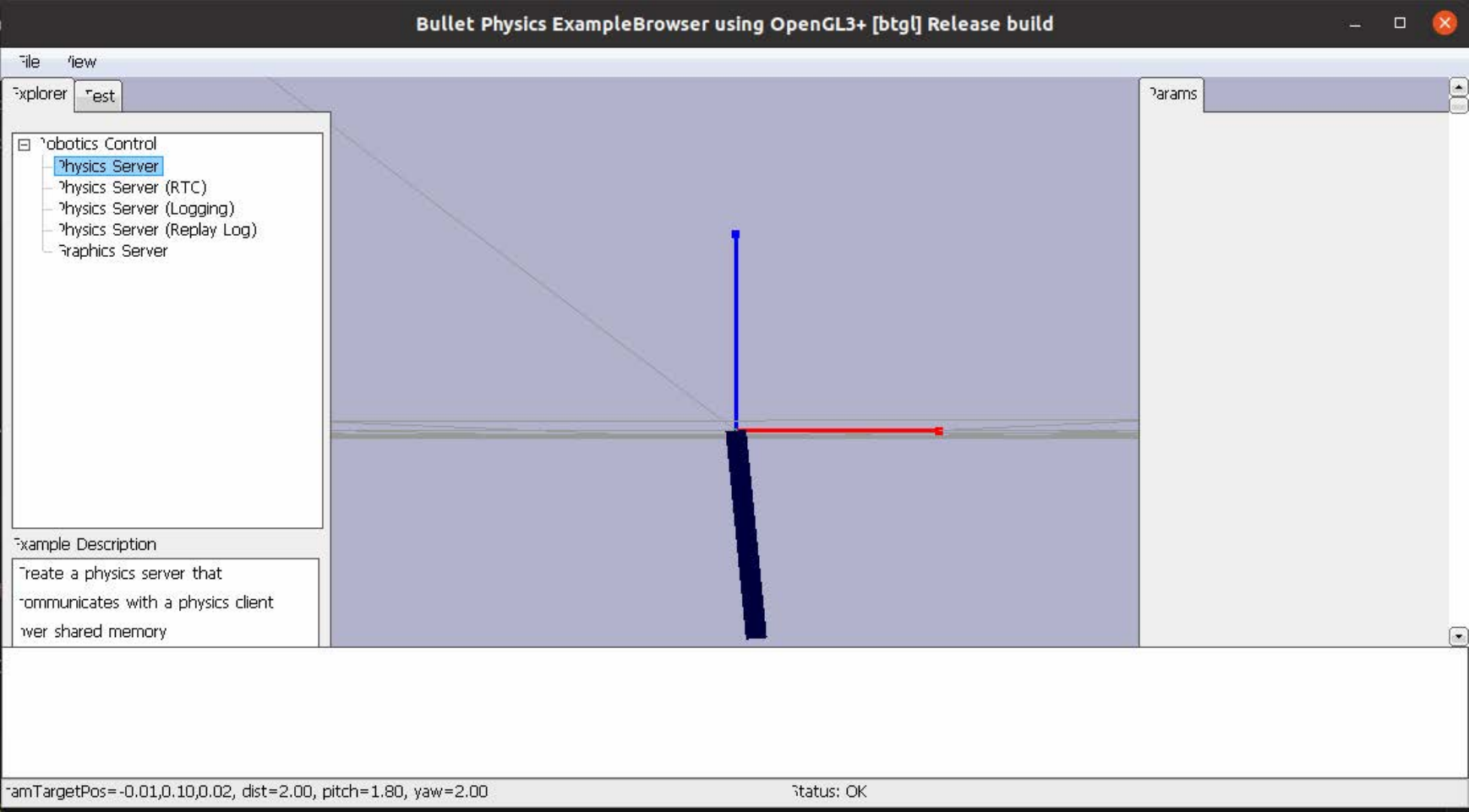}
  \caption{Pendulum model in the PyBullet simulator.}
  \label{fig: Pendulum model in Py-Bullet simulator}
\end{figure}
where $x_1$, $x_2$, and $u$ are the angular position, angular velocity, and torque input, respectively. The parameters $m=1kg$, $g=9.81m/s^2$, and $l=1m$ are the pendulum's mass, acceleration due to gravity, and rod length. The pendulum model simulated in the PyBullet stochastic high-fidelity simulator PyBullet is depicted in Fig. \ref{fig: Pendulum model in Py-Bullet simulator}. The parameter $\tau=0.005$ is the sampling time. The state-space is $X=[-0.2,0.2] \times[-0.5,0.5]$, while the input set is considered as $U=\{-1,-0.9,\ldots,0.9,1\}$. 
Data are collected using the procedure discussed in Section \ref{Sec:Data-driver Framework} with state-space discretization $\epsilon=0.004$. The stochastic simulation-gap functions are quantified as 
\begin{figure}
    \centering
    \includegraphics[width=0.9\linewidth]{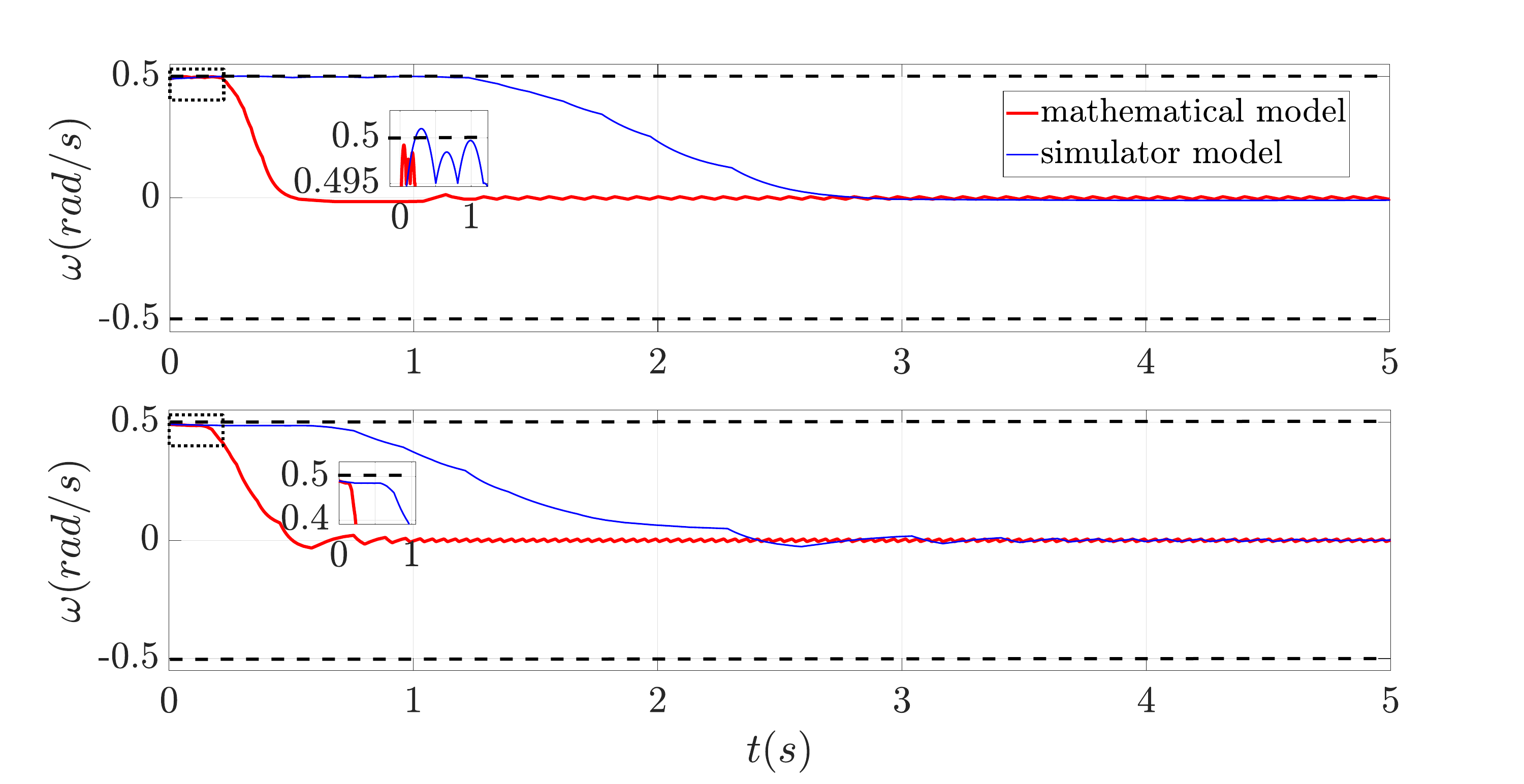}
    \caption{The mathematical system trajectory is shown in red, and the stochastic simulator trajectory in blue. In PyBullet, the controller synthesized using $\gamma(x,u)$ from \cite{sangeerth2024towards} violates the invariance specification (top). In contrast, the controller synthesized using the proposed approach ensures invariance in PyBullet in the first-moment sense (bottom), based on an average of 1000 trajectories. The $x_1$ trajectory, which remained within the safe set in both cases, is omitted for brevity. Multiple runs showing stochasticity are also omitted for brevity.}
    \label{fig:pendulum plots}
\end{figure}
$\gamma_1(x,u)=0.001x_2^2+0.00011x_2u+0.01189$ and 
$\gamma_2(x,u)=0.062x_1^2-0.0032x_2^2-0.0043u^2-0.0086x_1x_2-0.041x_1u-0.0002x_2u-0.0045x_1+0.00025x_2+0.1187$,
which holds with a guarantee of $0.973$ and $0.9649$, respectively (in the sense of Theorem~\ref{main_theorem}).
We use the SCOTS toolbox \cite{rungger2016scots} to design a symbolic controller for the invariance specification with safe set $[0,0.2] \times [-0.5,0.5]$. In Fig.~\ref{fig:pendulum plots} (top), the controller designed using the deterministic $\gamma(x,u)$ from \cite{sangeerth2024towards} meets the specification in the mathematical model but fails in the simulator due to unmodeled stochasticity. We address this by re-synthesizing the controller using our proposed stochastic simulation-gap function, which incorporates simulator stochasticity. As shown in Fig.~\ref{fig:pendulum plots} (bottom), the new controller satisfies the specification in the simulator (in the first-moment sense) with a $93.8\%$ probabilistic guarantee, as per Theorem~\ref{total_probability_result_theorem}. Data collection took around 0.4 hours, and the optimization completed in 2 seconds.

\section{Conclusion and Future Work}
This letter introduces a formal approach to quantify the gap between a nominal mathematical model and a high-fidelity \emph{stochastic} simulator, offering guarantees that account for simulator stochasticity, unlike existing approaches. When the high-fidelity simulator closely approximates real-world dynamics, our framework enables reliable controller synthesis, facilitating seamless deployment in real systems.  While we use the expected trajectory as a practical approximation under uncertainty, we aim to enhance robustness with respect to the true trajectory in future work. Currently, for our case studies, we use empirical variance estimation to estimate $\hat M^{(i)}$ (of Assumption \ref{variance assumption}), and we plan to explore other methods like sub-Gaussian analysis \cite{wainwright2019high} and Popoviciu’s inequality \cite{popoviciu1935equations} in future work. Future work will also explore measuring this gap directly using real-world data.

\bibliographystyle{ieeetr}
\bibliography{sources}

\end{document}